
\documentclass[12pt]{extarticle}

\usepackage{amsfonts}
\usepackage{amsmath}
\usepackage{amssymb}
\usepackage{amscd}
\usepackage{amsthm}
\usepackage{mathrsfs}
\usepackage{graphicx}
\usepackage{wasysym}
\usepackage{enumerate}
\usepackage{tikz}
\usepackage{geometry}
\usepackage{physics}
\usepackage{textcomp}
\usepackage[colorlinks=true, urlcolor=blue, linkcolor=blue, citecolor=green, pdfborder={0,0,0}]{hyperref}
\usepackage{inputenc}
\usepackage[explicit]{titlesec}
\usepackage{algorithm}
\usepackage{algpseudocode}
\usepackage{bm}
\usepackage{fullpage}

\newtheorem{theorem}{Theorem}

\newtheorem{lemma}[theorem]{Lemma}

\newtheorem{fact}[theorem]{Fact}

\DeclareFontFamily{U}{mathb}{\hyphenchar\font45}
\DeclareFontShape{U}{mathb}{m}{n}{
<-6> mathb5 <6-7> mathb6 <7-8> mathb7
<8-9> mathb8 <9-10> mathb9
<10-12> mathb10 <12-> mathb12
}{}
\DeclareSymbolFont{mathb}{U}{mathb}{m}{n}
\DeclareMathSymbol{\llcurly}{\mathrel}{mathb}{"CE}
\DeclareMathSymbol{\ggcurly}{\mathrel}{mathb}{"CF}

\newcommand{\mb}{\mathbb}

\newcommand{\mc}{\mathcal}

\newcommand{\tsf}{\textsf}
\newcommand{\mbf}{\mathbf}

\newcommand{\tx}{\text}

\newcommand{\wt}{\widetilde}

\newcommand{\iref}[2]{(\hyperref[#2]{\ref*{#1}.\ref*{#2}})}

\newcommand{\F}{\mathbb{F}}

\title{Decoding Downset codes over a grid}
\author{Srikanth Srinivasan\thanks{Department of Mathematics, IIT Bombay. Email: \texttt{srikanth@math.iitb.ac.in}. Supported by SERB grant MTR/20l7/000958.}
\and Utkarsh Tripathi\thanks{Department of Mathematics, IIT Bombay. Email: \texttt{utkarshtripathi.math@gmail.com}. Supported by the Ph.D. Scholarship of NBHM, DAE, Government of India. }
\and S. Venkitesh\thanks{Department of Mathematics, IIT Bombay. Email: \texttt{venkitesh.mail@gmail.com}. Supported by the Senior Research Fellowship of HRDG, CSIR, Government of India.}
}
\date{\today}

\begin{document}

\maketitle

\begin{abstract}
In a recent paper, Kim and Kopparty (Theory of Computing, 2017) gave a deterministic algorithm for the unique decoding problem for polynomials of bounded total degree over a general grid $S_1\times\cdots \times S_m.$ We show that their algorithm can be adapted to solve the unique decoding problem for the general family of \emph{Downset codes}. Here, a downset code is specified by a family $\mc{D}$ of monomials closed under taking factors: the corresponding code is the space of evaluations of all polynomials that can be written as linear combinations of monomials from $\mc{D}.$
\end{abstract}

Polynomial-based codes play an important role in Theoretical Computer Science in general and Computational Complexity in particular. Combinatorial and computational characteristics of such codes are crucial in proving many of the landmark results of the area, including those related to interactive proofs~\cite{LFKN,Shamir,BFL}, hardness of approximation~\cite{ALMSS}, trading hardness for randomness~\cite{BFNW,STV} etc..

Often, in these applications, we consider polynomials of total degree at most  $d$ evaluated at all points of a finite grid $S = S_1\times \cdots \times S_m\subseteq \mathbb{F}^m$ for some field $\mathbb{F}.$ When $d < k := \min_i\{|S_i|\ |\ i\in [m]\}$, this space of polynomials forms a code of positive distance $\mu := |S|\cdot(1- (d/k))$ given by the well-known DeMillo-Lipton-Schwartz-Zippel lemma~\cite{DL,S,Z} (DLSZ lemma from here on). 

A natural algorithmic question related to this is the \emph{Unique Decoding problem}: given $f:S\rightarrow \mathbb{F}$ that is guaranteed to have (Hamming) distance less than $\mu/2$ from some element $P$ of the code, can we find this $P$ efficiently? This problem was solved in full generality only very recently, by an elegant result of Kim and Kopparty~\cite{kimkopparty2017} who gave a deterministic polynomial-time algorithm for this problem. We refer to this algorithm as the KK algorithm. 

What about the Unique decoding problem when $d \geq k$? In this setting, one must be careful in defining the problem since the space of polynomials of total degree at most $d$ no longer has positive distance.\footnote{e.g. if $d\geq |S_j|$, the non-zero polynomial $\prod_{a\in S_j}(X_j-a)$ of degree at most $d$ vanishes over all of $S$.} However, one can ensure positive distance by enforcing additional \emph{individual degree} constraints: specifically, we require that the degree of each variable $X_i$ in the underlying polynomial be strictly smaller than $|S_i|.$ With a little bit of effort, one can check that the KK algorithm can also be adapted to this setting. In particular, this means that the KK algorithm generalizes a result of Reed~\cite{Reed} from the 1950s, which gives an algorithm for decoding multilinear (i.e. all individual degrees are at most $1$) polynomials over $\{0,1\}^m.$

Motivated by this, we try to understand the scope of applicability of the KK algorithm. Can we show that the KK algorithm works for a cleanly defined general family of codes encompassing the results above? As a possible answer to this question, we put forward the class of \emph{Downset Codes.} A \emph{Downset} $\mc{D}$ is a finite set of monomials (over the variables $X_1,\ldots,X_m$) that is closed under taking factors. The downset code $\mc{C}(S,\mc{D})$ is the code consisting of all polynomials that can be written as linear combinations of monomials from $\mc{D}.$ The space of polynomials of total degree at most $d$, for example, is clearly a downset code. But one can also add individual degree constraints, weighted degree constraints, bounds on the support-size (i.e. number of variables) in any monomial etc.. Downset codes thus yield a fairly general family of codes.

Furthermore, there is a natural variant of the DLSZ lemma that yields a combinatorial characterization of the minimum distance of any downset code $\mc{C}(S,\mc{D})$.\footnote{In particular, a code $\mc{C}(S,\mc{D})$ has positive distance if and only if all the monomials in $\mc{D}$ have individual degree less than $k_i$ w.r.t. each variable $X_i$.} (The proof of this lemma uses a classical theorem of Macaulay~\cite{Macaulay} that reduces the problem of estimating the size of a finite variety to counting the number of non-leading monomials in the ideal of the variety.) It is thus natural to ask if we can solve the unique decoding problem for such codes. 

Our main result is that the KK algorithm can be suitably adapted to yield a deterministic polynomial-time algorithm that solves the unique decoding problem for any downset code $\mc{C}(S,\mc{D}).$ Furthermore, the algorithm and its proof of correctness are quite clean; in particular, working in the fairly general setting of downset codes leads to  a simple abstract analysis of the algorithm.

\section{Preliminaries}

Throughout fix a field $\F.$  Let $S_1,\ldots,S_m$ be finite non-empty subsets of $\F$ and let $S$ denote the grid $S_1\times S_2\times\cdots \times S_m.$ We use $k_i$ to denote $|S_i|.$ Given functions $f,g:S\rightarrow \mb{F}$, we use $\Delta(f,g)$ to denote the Hamming distance between $f$ and $g$, i.e. the number of points where they differ.

Let $\mc{M}$ denote the set $\{0,\ldots,k_1-1\}\times\cdots \times \{0,\ldots,k_m-1\}$ with the natural partial order $\preceq.$ For each $\alpha\in \mc{M},$ we denote by $\nabla(\alpha)$ the set $\{\beta\in \mc{M}\ |\  \alpha \preceq \beta\}$ and by $\Delta(\alpha)$ the set $\{\beta\in \mc{M}\ |\ \beta \preceq \alpha\}.$  For any $\alpha\in\mc{M}$, we will identify the monomial $\mbf{X}^\alpha:= X_1^{\alpha_1}\cdots X_m^{\alpha_m}$ with $\alpha$ and use the monomial notation and the multi-index notation interchangeably.

The following fact is standard.

\begin{fact}
	\label{fac:fnstopolys}
	Each $f:S\rightarrow \mb{F}$ has a unique representation as a polynomial $P(X_1,\ldots,X_m)$ where the degree of $X_i$ in $P$ is at most $k_i-1$ for each $i\in [m]$. Equivalently, there is a natural one-one correspondence between the space of all functions from $S$ to $\mb{F}$ and $\mc{C}(S,\mc{M}).$
\end{fact}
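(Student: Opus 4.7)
The plan is to prove this by a standard dimension-count plus an injectivity (equivalently, surjectivity) argument for the evaluation map. Let $\mc{P} = \mc{C}(S,\mc{M})$ denote the $\mb{F}$-vector space spanned by the monomials $\mbf{X}^\alpha$ with $\alpha \in \mc{M}$, viewed formally, and let $\mc{F}$ denote the space of functions $S\to\mb{F}.$ The evaluation map $\mrm{ev}:\mc{P}\to\mc{F}$ sending $P$ to $P|_S$ is $\mb{F}$-linear, and both spaces have dimension $\prod_{i=1}^m k_i = |S|$ over $\mb{F}$ (the first because the monomials $\mbf{X}^\alpha$ for $\alpha\in\mc{M}$ are linearly independent as formal polynomials, the second by the standard basis of indicator functions). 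Hence it is enough to establish either injectivity or surjectivity of $\mrm{ev}$.

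For injectivity, I would argue by induction on $m$. The base case $m=1$ is just the fact that a univariate polynomial of degree at most $k_1-1$ with $k_1$ distinct roots in $\mb{F}$ must be the zero polynomial. For the inductive step, write a hypothetical $P\in\mc{P}$ with $P|_S\equiv 0$ as
\[
P(X_1,\ldots,X_m) = \sum_{j=0}^{k_m-1} P_j(X_1,\ldots,X_{m-1})\, X_m^j,
\]
where each $P_j$ has individual degree at most $k_i-1$ in $X_i$ for $i<m$. Fix any $(a_1,\ldots,a_{m-1}) \in S_1\times\cdots\times S_{m-1}$; then the univariate polynomial $\sum_j P_j(a_1,\ldots,a_{m-1})\, X_m^j$ has degree at most $k_m-1$ and vanishes on all of $S_m$, hence is identically zero by the base case. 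Thus each $P_j$ vanishes on $S_1\times\cdots\times S_{m-1}$, and the inductive hypothesis forces $P_j \equiv 0$ for every $j$, giving $P \equiv 0$.

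Alternatively (and perhaps more cleanly, since it also exhibits the inverse map explicitly), I would prove surjectivity via Lagrange interpolation: for each $\mbf{a} = (a_1,\ldots,a_m)\in S$, the polynomial
\[
L_{\mbf{a}}(X_1,\ldots,X_m) \;=\; \prod_{i=1}^m \prod_{\substack{b\in S_i\\ b\neq a_i}} \frac{X_i - b}{a_i - b}
\]
lies in $\mc{P}$ (its individual degree in $X_i$ is exactly $k_i - 1$) and satisfies $L_{\mbf{a}}(\mbf{a}) = 1$ and $L_{\mbf{a}}(\mbf{a}') = 0$ for all $\mbf{a}'\in S\setminus\{\mbf{a}\}$. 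Consequently $f = \sum_{\mbf{a}\in S} f(\mbf{a})\, L_{\mbf{a}}$ represents any given $f\in\mc{F}$, proving surjectivity, and uniqueness follows from the dimension count.

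There is no real obstacle here; the only point to watch is keeping $\mc{P}$ as a space of \emph{formal} polynomials (so that the dimension count is unambiguous, even when $\mb{F}$ is finite), after which everything reduces to the one-variable case either via induction or via explicit interpolants.
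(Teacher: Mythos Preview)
Your argument is correct and entirely standard: the dimension count is right, the inductive injectivity argument is clean, and the Lagrange-interpolation alternative works as written. There is nothing to fix mathematically.

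As for comparison with the paper: there is nothing to compare against. The paper does not prove this fact at all---it is stated as ``standard'' and left without proof. Your write-up therefore supplies exactly the routine verification the paper omits. If you are inserting this into the paper, note that the macro \verb|\mrm| used in your proposal is not defined there; write \verb|\mathrm{ev}| (or simply name the map in words) instead.
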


Given a downset $\mc{D}\subseteq \mc{M},$ we associate with it the linear code $\mc{C}(S,\mc{D})$, called a \emph{downset code}, defined by 
\[
\mc{C}(S,\mc{D}) = \{f:S\rightarrow \F\ |\ \text{$f$ can be represented by a linear combination of monomials from $\mc{D}$}\}.
\]

The following lemma allows us to compute the minimum distance $\mu(S,\mc{D})$ for any downset $\mc{D}\subseteq \mc{M}.$ Recall (see e.g. \cite{CLO}) that a \emph{monomial order} on monomials in $X_1,\ldots,X_m$ is a total ordering $\sqsubseteq$ of the monomials that is a well order and moreover satisfies the following for any $\alpha,\beta,\gamma\in \mathbb{N}^m$: $\mbf{X}^{\alpha}\sqsubseteq \mbf{X}^\beta \Rightarrow \mbf{X}^{\alpha + \gamma} \sqsubseteq \mbf{X}^{\beta + \gamma}.$

\begin{lemma}[Schwartz-Zippel Lemma for $\mc{C}(S,\mc{D})$]
	\begin{enumerate}
    \item Let $f\in \mc{C}(S,\mc{M})$ be arbitrary and let $\bm{X}^{\alpha}$ be the leading monomial of $f$ w.r.t. a monomial order. Then, $|\mathrm{Supp}(f)| \geq |\nabla(\alpha)|.$
    \item For each $\alpha\in \mc{M},$ there is an $f:S\rightarrow \F$ such that $|\mathrm{Supp}(f)| = |\nabla(\alpha)|$ and $f$ can be represented by a linear combination of monomials from $\Delta(\alpha).$ In particular, if $\alpha\in \mc{D},$ then such an $f\in \mc{C}(S,\mc{D}).$
    \end{enumerate}
    Thus, $\mu(S,\mc{D}) = \min_{\alpha\in \mc{D}}|\nabla(\alpha)|.$ In particular, given $\mc{D},$ it can be found in polynomial time.
\end{lemma}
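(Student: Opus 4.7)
The plan is to prove parts (1) and (2) separately and then combine them to deduce the formula for $\mu(S,\mc{D})$ and its polynomial-time computability.

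For part (2), the plan is to write down an explicit polynomial. Fix any enumeration $s_i^{(0)},\ldots,s_i^{(k_i-1)}$ of each $S_i$ and consider
\[
f(\mbf{X}) \;=\; \prod_{i=1}^m\prod_{j=0}^{\alpha_i-1}(X_i - s_i^{(j)}).
\]
Its individual degree in $X_i$ equals $\alpha_i$, so every monomial in its expansion lies in $\Delta(\alpha)$; a point $x \in S$ satisfies $f(x)\ne 0$ iff $x_i \in \{s_i^{(\alpha_i)},\ldots,s_i^{(k_i-1)}\}$ for every $i$, which gives $|\mathrm{Supp}(f)| = \prod_i(k_i-\alpha_i) = |\nabla(\alpha)|$. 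When $\alpha\in\mc{D}$, the downset property yields $\Delta(\alpha)\subseteq\mc{D}$, so $f\in\mc{C}(S,\mc{D})$.

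For part (1), the plan is to invoke Macaulay's theorem along the lines foreshadowed in the introduction. Let $J = \langle \prod_{a\in S_i}(X_i-a) : i\in[m]\rangle$ be the vanishing ideal of the grid and $I = J + \langle f\rangle$. By Macaulay, the monomials outside the leading-monomial ideal $\mathrm{LM}(I)$ form an $\F$-basis of $\F[\mbf{X}]/I$, so $\dim_\F \F[\mbf{X}]/I$ equals the number of these ``standard'' monomials. The monomial-order axioms (well-order plus translation) force $1$ to be the minimum monomial, hence $X_i^{k_i}$ is the leading term of the $i$-th generator of $J$ for any monomial order; so every standard monomial has $i$-th exponent at most $k_i-1$, i.e.\ lies in $\mc{M}$. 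Since $f\in I$ and $\mathrm{LM}(I)$ is a monomial ideal, $\nabla(\alpha)\subseteq\mathrm{LM}(I)$, and together this gives $\dim_\F \F[\mbf{X}]/I \le |S| - |\nabla(\alpha)|$. On the other side, each zero of $f$ on $S$ determines a well-defined evaluation functional on $\F[\mbf{X}]/I$, and these functionals are linearly independent (standard Lagrange-style interpolation on the grid), so the number of zeros of $f$ on $S$ is at most $\dim_\F \F[\mbf{X}]/I$. Combining yields $|\mathrm{Supp}(f)| \ge |\nabla(\alpha)|$.

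Assembling: the upper bound $\mu(S,\mc{D}) \le \min_{\alpha\in\mc{D}}|\nabla(\alpha)|$ is witnessed by the polynomials constructed in part (2) for each $\alpha\in\mc{D}$. For the matching lower bound, any nonzero $f\in\mc{C}(S,\mc{D})$ has all of its monomials in $\mc{D}$, in particular its leading monomial $\mbf{X}^\alpha$, so part (1) gives $|\mathrm{Supp}(f)| \ge |\nabla(\alpha)| \ge \min_{\beta\in\mc{D}}|\nabla(\beta)|$. The minimum is computed in polynomial time by evaluating $|\nabla(\alpha)| = \prod_i(k_i-\alpha_i)$ once per element of $\mc{D}$. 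The one step that is not bookkeeping on leading monomials, and which appears to be the main sticking point, is the linear independence of the evaluation functionals on $\F[\mbf{X}]/I$; this is where geometric data about the grid enters an otherwise purely algebraic argument.
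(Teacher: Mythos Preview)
Your proof is correct and follows essentially the same route as the paper: both derive item~1 from Macaulay's theorem applied to the ideal generated by $f$ and the grid polynomials $\prod_{a\in S_i}(X_i-a)$, and both give the identical explicit product polynomial for item~2. The only difference is packaging: the paper unpacks Macaulay by an explicit reduction argument (showing directly that every function on the zero set $A$ of $f$ is represented by a polynomial in $\mc{C}(S,\mc{M}\setminus\nabla(\alpha))$, hence $|A|\le |\mc{M}|-|\nabla(\alpha)|$), whereas you invoke Macaulay as a black box and bound $|A|$ via linear independence of the evaluation functionals on $\F[\mbf{X}]/I$---the dual formulation of the same surjectivity statement. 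Incidentally, the step you flag as the ``main sticking point'' (linear independence of point evaluations) is routine: for any finite set of distinct points one has separating polynomials, so this is not where the content lies.
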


\begin{proof}
Item 1 is an easy consequence of the proof of Macaulay's theorem~\cite{Macaulay} (see also~\cite[Chapter 5.3, Proposition 4]{CLO}). For completeness, we present a short proof here. Given any polynomial $P\in \mb{F}[X_1,\ldots,X_m]$, let $\mathrm{mSupp}(P)$  denote the set of monomials with non-zero coefficient in $P$. For any set of monomials $\mc{M}',$ let $\Delta(\mc{M}')$ denote the set of monomials that divide some monomial in $\mc{M}'.$

For any $i\in [m]$, let $f_i(X_i) = \prod_{a\in S_i}(X_i-a).$ Note that $f_i$ is a univariate polynomial of degree $k_i$ that vanishes on $S$. Given any polynomial $P\in \mb{F}[X_1,\ldots,X_m]$, the remainder $P_i$ obtained upon dividing $P$ by $f_i$ has degree $< k_i$ in the variable $X_i$ and evaluates to the same value as $P$ at points in $S$. Further, each monomial in $\mathrm{mSupp}(P_i)$ divides some monomial in $\mathrm{mSupp}(P)$ (i.e. $\mathrm{mSupp}(P_i)\subseteq \Delta(\mathrm{mSupp}(P))$) . Repeating this process, we eventually obtain a polynomial $\widetilde{P}\in \mc{C}(S,\mc{M}\cap \Delta(\mathrm{mSupp}(P)))$ representing the same function as $P$.

Let $A$ be the subset of points in $\mbf{x}\in S$ where $f(\mbf{x}) = 0.$ To prove item 1 of the lemma, it suffices to show that every $g:A\rightarrow \mb{F}$ can be represented as a polynomial from $\mc{C}(S,\mc{M}\setminus \nabla(\alpha)).$ Standard linear algebra then implies that $|A|\leq |\mc{M}| - |\nabla(\alpha)| = |S|-|\nabla(\alpha)|,$ which proves item 1.

To prove the above, fix any $g:A\rightarrow \mb{F}$. By extending $g$ in an arbitrary way to $S$, we know that $g$ can be represented by some polynomial $Q\in \mc{C}(S,\mc{M}).$ If $\mathrm{mSupp}(Q)$ does not contain any monomial from $\nabla(\alpha),$ then we are done. Otherwise, we choose the largest (w.r.t. $\sqsubseteq$) monomial $\mbf{X}^{\beta}\in \nabla(\alpha)\cap \mathrm{mSupp}(Q)$. Let $a$ be the coefficient of $\mbf{X}^{\beta}$ in $Q$. 

Assume that $f(\mbf{X}) = \mbf{X}^{\alpha} + f_1(\mbf{X})$ where $LM(f_1) \sqsubset \mbf{X}^{\alpha}$. Multiplying by $\mbf{X}^{\beta-\alpha}$, we see that the polynomial $\mbf{X}^{\beta}+\mbf{X}^{\beta-\alpha}f_1(\mbf{X})$ vanishes on $A$. Note that $Q_1 = Q-a\mbf{X}^{\beta-\alpha}f$ is a polynomial such that $\mathrm{mSupp}(Q_1)\not\ni \mbf{X}^{\beta}$ that also represents the function $g$ at points in $A$. Repeating this process, we eventually obtain a polynomial $P$ without any monomials from $\nabla(\mbf{X}^\alpha)$ that represents $g$. The polynomial $\widetilde{P}$ (obtained by dividing by $f_i$ as mentioned above) also represents $g$ and furthermore is an element of $\mc{C}(S,\mc{M}\cap \Delta(\mathrm{mSupp}(P)) \subseteq \mc{C}(S,\mc{M}\setminus \nabla(\alpha)).$ 

For item 2, assume that $S_i = \{a^i_1,\ldots,a^i_{k_i}\}$ for each $i\in [m]$ and consider $f(\mbf{X}) = \prod_{i\in [m]}\prod_{j\leq \alpha_i}(X_i-a^i_j).$ 
\end{proof}

Let $\widetilde{\mc{M}}$ denote $\{0,\ldots,k_1-1\}\times\cdots\times\{0,\ldots,k_{m-1}-1\}$ with its natural partial order. Let $\widetilde{S}$ denote the set $S_1\times \cdots \times S_{m-1}.$

Given a downset $\mc{D}\subseteq \mc{M}$ as above, let $\deg_m(\mc{D}) = \max\{j\in [k_m-1]\ |\ \exists\ \alpha\in \mc{D}\ \text{s.t.}\ \alpha_m = j\}.$ For $i\in \{0,\ldots,\deg_m(\mc{D})\},$ define 
\[
\mc{D}_i = \{ \beta\in \widetilde{\mc{M}}\ |\ (\beta,i)\in \mc{D}\}.
\]

The following observation will be useful.
\begin{lemma}
\label{lem:downset}
Let $\mc{D}\subseteq \mc{M}$ be any downset, and let $d = \deg_m(\mc{D}).$
\begin{enumerate}
    \item For each $i\in\{0,\ldots,d\},$ $\mc{D}_i$ is a downset in $\widetilde{\mc{M}}.$ Further, we have $\mc{D}_0\supseteq \mc{D}_1\supseteq \cdots \supseteq \mc{D}_d.$
    \item For each $i\in\{0,\ldots,d\},$ we have $\mu(S,\mc{D}) \leq \mu(\widetilde{S},\mc{D}_i)\cdot \mu(S_m,\{0,\ldots,i\}).$ 
\end{enumerate}
\end{lemma}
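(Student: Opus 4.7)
The plan is to unpack both claims directly from the definition of a downset and the product-form characterization of $\mu$ given by the preceding Schwartz--Zippel lemma; no new machinery is needed.

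For Part 1, I would argue by routine verification. To see that each $\mc{D}_i$ is a downset, take any $\beta \in \mc{D}_i$ and any $\beta' \preceq \beta$ in $\widetilde{\mc{M}}$. By definition $(\beta,i) \in \mc{D}$, and clearly $(\beta',i) \preceq (\beta,i)$ in $\mc{M}$, so $(\beta',i) \in \mc{D}$ since $\mc{D}$ is a downset, hence $\beta' \in \mc{D}_i$. For the chain of containments, for any $\beta \in \mc{D}_{i+1}$, the relation $(\beta,i) \preceq (\beta,i+1) \in \mc{D}$ forces $(\beta,i) \in \mc{D}$, i.e. $\beta \in \mc{D}_i$.

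For Part 2, the key observation is that $|\nabla(\alpha)|$ is multiplicative in the coordinates of $\alpha$, i.e. $|\nabla(\alpha)| = \prod_{j=1}^{m}(k_j - \alpha_j)$. By the previous lemma,
\[
\mu(S,\mc{D}) = \min_{\alpha\in \mc{D}}|\nabla(\alpha)|, \qquad \mu(\widetilde{S},\mc{D}_i) = \min_{\beta\in \mc{D}_i}|\nabla(\beta)|,
\]
and in the univariate case $\mu(S_m, \{0,\ldots,i\}) = k_m - i$. I would then fix $\beta^{*} \in \mc{D}_i$ achieving $\mu(\widetilde{S},\mc{D}_i)$, form $(\beta^{*},i) \in \mc{D}$, and simply compute
\[
\mu(S,\mc{D}) \;\leq\; |\nabla((\beta^{*},i))| \;=\; |\nabla(\beta^{*})|\cdot (k_m - i) \;=\; \mu(\widetilde{S},\mc{D}_i)\cdot \mu(S_m,\{0,\ldots,i\}).
\]

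There is essentially no real obstacle: the lemma is a structural preparatory observation, and each half is one line once one notices (i) the hereditary property of $\preceq$ in the last coordinate and (ii) the multiplicativity of $|\nabla(\cdot)|$. I expect the lemma to be used later to reduce decoding of $\mc{C}(S,\mc{D})$ to decoding the family of ``sliced'' codes $\mc{C}(\widetilde{S},\mc{D}_i)$ in the spirit of the Kim--Kopparty recursion.
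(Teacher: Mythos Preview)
Your proof is correct. Part~1 is exactly what the paper does (it simply says ``Clear from the definition, since $\mc{D}$ is a downset''). For Part~2 you take a slightly different route: the paper picks minimum-weight codewords $U\in\mc{C}(\wt{S},\mc{D}_i)$ and $V\in\mc{C}(S_m,\{0,\ldots,i\})$, checks (using Part~1) that every monomial of $U\cdot V$ lies in $\mc{D}$, and concludes from the product weight. You instead work purely combinatorially via the formula $\mu = \min_\alpha |\nabla(\alpha)|$ and the factorization $|\nabla(\beta^*,i)| = |\nabla(\beta^*)|\cdot(k_m-i)$. The two arguments are essentially dual---your $(\beta^*,i)$ is the leading monomial of the paper's $U\cdot V$---but your version is marginally cleaner since it bypasses the monomial-support check for the full product $U\cdot V$.
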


\begin{proof}
	\begin{enumerate}
		\item  Clear from the definition, since $\mc{D}$ is a downset.
		\item  Let $U\in\mc{C}(\wt{S},\mc{D}_i)$ and have weight equal to $\mu(\wt{S},\mc{D}_i)$, and $V\in\mc{C}(S_m,\{0,\ldots,i\})$ and have weight equal to $\mu(S_m,\{0,\ldots,i\})$.  Let $\mbf{X}^\alpha$ be any monomial in $U(\mbf{X})$ and $Y^j$ be any monomial in $V(Y)$.  So $\mbf{X}^\alpha\in\mc{D}_i$ and $j\in\{0,\ldots,i\}$. Then by Item 1, $\mc{D}_i\subseteq\mc{D}_j$ and so $\mbf{X}^\alpha\in\mc{D}_j$, that is, $\mbf{X}^\alpha Y^j\in\mc{D}$.  Thus, $U\cdot V\in\mc{C}(S,\mc{D})$ and hence, has weight at least $\mu(S,\mc{D})$.  So we see that
		\[
		\mu(S,\mc{D})\le \mu(\wt{S},\mc{D}_i)\cdot\mu(S_m,\{0,\ldots,i\}).
		\]
	\end{enumerate}
\end{proof}
(Note that the quantity $\mu(S_m,\{0,\ldots,i\})$ is the distance of the degree-$i$ Reed-Solomon code on the set $S_m.$)

As in the result of Kim and Kopparty, we work with the more general problem of decoding \emph{Weighted functions} (or weighted received word). A weighted function over $S$ is a function $w:S\rightarrow \mb{F}\times [0,1]$ or equivalently a pair $(f,u)$ where $f:S\rightarrow \mb{F}$ and $u:S\rightarrow [0,1].$ Given a weighted function $w = (f,u)$ and a function $g:S\rightarrow \mb{F}$, we define their distance $\Delta(w,g)$ by
\[
\Delta(w,g) = \sum_{\mbf{x}: f(\mbf{x}) = g(\mbf{x})} \frac{u(\mbf{x})}{2} + \sum_{\mbf{x}: f(\mbf{x}) \neq g(\mbf{x})} \left(1-\frac{u(\mbf{x})}{2}\right).
\]
A (unweighted) function $h:S\rightarrow \mb{F}$ is identified with the weighted function $(h,u)$ where $u$ is the identically zero function. Note that with this identification, $\Delta((h,u),g)$ agrees with the standard Hamming distance $\Delta(h,g)$ between $h$ and $g$. In particular, the unique decoding problem for $\mc{C}(S,\mc{D})$ immediately reduces to the problem of finding a codeword of distance less than $\mu(S,\mc{D})/2$ from a given weighted function.

We will also need the following lemma about weighted codewords of a downset code $\mc{C}(S,\mc{D})$. The proof (in a more general setting) can be found in \cite[Lemma 2.1]{kimkopparty2017}. 

\begin{lemma}
\label{lem:triangle}
Assume that $G,H\in \mc{C}(S,\mc{D})$ are distinct.\footnote{In the algorithm, we will only need this for $m=1$, i.e. for the Reed-Solomon code.}  Let $f:S\to \mathbb{F}\times [0,1]$ be any weighted received word. Then, $\Delta(f,G) + \Delta(f,H)\geq \Delta(G,H) \geq \mu(S,\mc{D}).$ In particular, both $G$ and $H$ cannot be at distance strictly less than $\mu(S,\mc{D})/2$ from $f$.
\end{lemma}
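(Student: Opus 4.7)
The plan is to prove both inequalities pointwise and then sum. The second inequality $\Delta(G,H)\ge \mu(S,\mc{D})$ is immediate: since $\mc{C}(S,\mc{D})$ is linear and $G\ne H$, the difference $G-H$ is a nonzero codeword, so its Hamming weight (which equals $\Delta(G,H)$) is at least the minimum distance $\mu(S,\mc{D})$.

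For the main inequality $\Delta(f,G)+\Delta(f,H)\ge \Delta(G,H)$, write $f = (f_0,u)$ and show that for every $\mbf{x}\in S$, the pointwise contribution to the left-hand side is at least $\mbf{1}[G(\mbf{x})\ne H(\mbf{x})]$. The case $G(\mbf{x})=H(\mbf{x})$ is trivial, since both terms are nonnegative. When $G(\mbf{x})\ne H(\mbf{x})$, $f_0(\mbf{x})$ can agree with at most one of them, so I split into three subcases: (i) $f_0(\mbf{x})=G(\mbf{x})\ne H(\mbf{x})$ contributes $u(\mbf{x})/2 + (1-u(\mbf{x})/2) = 1$; (ii) symmetrically for $f_0(\mbf{x})=H(\mbf{x})$; and (iii) $f_0(\mbf{x})$ differs from both, giving $2-u(\mbf{x})\ge 1$ because $u(\mbf{x})\in[0,1]$. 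In every subcase the pointwise contribution is at least $1$, matching the indicator.

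Summing over $\mbf{x}\in S$ yields $\Delta(f,G)+\Delta(f,H)\ge |\{\mbf{x}:G(\mbf{x})\ne H(\mbf{x})\}| = \Delta(G,H)$, which chained with the first paragraph gives the full inequality. The ``in particular'' clause is then immediate: if both $\Delta(f,G)<\mu(S,\mc{D})/2$ and $\Delta(f,H)<\mu(S,\mc{D})/2$ held, summing would contradict $\Delta(f,G)+\Delta(f,H)\ge \mu(S,\mc{D})$.

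There is no real obstacle here; the only mildly non-routine step is the subcase where $f_0(\mbf{x})$ agrees with neither $G(\mbf{x})$ nor $H(\mbf{x})$, where one must explicitly use $u(\mbf{x})\le 1$ to lower-bound $2-u(\mbf{x})$ by $1$. This is also the step that explains why the definition of weighted distance uses the factor $1/2$ in both summands: it makes the pointwise triangle inequality tight precisely when $f_0$ interpolates between $G$ and $H$ with maximal confidence weight.
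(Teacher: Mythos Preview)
Your proof is correct. The paper does not supply its own proof of this lemma; it simply cites \cite[Lemma~2.1]{kimkopparty2017} for the argument in a more general setting, and your pointwise triangle-inequality verification is exactly the standard way to establish it.
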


\section{The main theorem}

\begin{algorithm}
\caption{\tsf{WeightedDownsetDecoder}: Decoding a downset code over a grid}
\label{algo:wddecoder}

\begin{algorithmic}[1]
	\State  Input: $(S,\mc{D},w)$, where
	\begin{itemize}
		\item  $S=S_1\times\cdots\times S_m$ is a finite grid in $\mb{F}^n$ with $k_i=|S_i|$.  \Comment  We have $\wt{S}=S_1\times\cdots\times S_{m-1}$.
		\item  $\mc{D}\subseteq\mc{M}$ is a downset.
		\item  $w:S\to\mb{F}\times[0,1]$ is a weighted received word.
	\end{itemize}
	
	\If{$m=1$}
	\State\Return\tsf{WeightedRSDecoder}($S,\mc{D},w$).  \Comment  Here $S\subseteq\mb{F}$ and $\mc{D}=\{0,\ldots,d\}$, for some $d$.
	\Else
	\State  Define $r:S\to\mb{F}$ and $u:S\to[0,1]$ by
	\[
	w(\mbf{x},y)=(r(\mbf{x},y),u(\mbf{x},y)),\quad\tx{for all }(\mbf{x},y)\in\wt{S}\times S_m.
	\]
		\For{$i=0,\ldots,d=\deg_m(\mc{D})$}
		\State  Define $w_i:S\to\mb{F}\times[0,1]$ by
		\[
		w_i(\mbf{x},y)=\left(r(\mbf{x},y)-\sum_{j=0}^{i-1}Q_j(\mbf{x})y^{d-j},u(\mbf{x},y)\right),\quad\tx{for all }(\mbf{x},y)\in\wt{S}\times S_m.
		\]
			\For{$\mbf{x}\in\wt{S}$}
			\State  Define $w_{i,\mbf{x}}:S_m\to\mb{F}$ by $w_{i,\mbf{x}}(y)=w(\mbf{x},y)$, for all $(\mbf{x},y)\in\wt{S}\times S_m$.
			\State  Let $G_\mbf{x}(Y)=\tsf{WeightedRSDecoder}(S_m,\{0,\ldots,d-i\},w_{i,\mbf{x}})\in\mb{F}[Y]$.
				\If{$\Delta(w_{i,\mbf{x}},G_\mbf{x})<\mu(S_m,\{0,\ldots,d-i\})/2$}
				\State  $\sigma_\mbf{x}=\tx{coeff}(Y^{d-i},G_\mbf{x}(Y))$.
				\State  $\delta_\mbf{x}=\Delta(w_{i,\mbf{x}},G_\mbf{x})$.
				\Else
				\State  $\sigma_\mbf{x}=0$.
				\State  $\delta_\mbf{x}=\mu(S_m,\{0,\ldots,d-i\})/2$.
				\EndIf
			\EndFor
			\State  Define weighted word $f_i:\wt{S}\to\mb{F}\times[0,1]$ and $\delta_i:\wt{S}\to[0,1]$ by
			\[
			f_i(\mbf{x})=\left(\sigma_\mbf{x},\frac{\delta_\mbf{x}}{\mu(S_m,\{0,\ldots,d-i\})/2}\right)=(\sigma_\mbf{x},\delta_i(\mbf{x})).
			\]
			\State  Let $Q_i(\mbf{X})=\tsf{WeightedDownsetDecoder}(\wt{S},\mc{D}_{d-i},f_i)$.
		\EndFor
		\State\Return$\sum_{i=0}^d Q_i(\mbf{X})Y^{d-i}$.
	\EndIf
\end{algorithmic}

\end{algorithm}

\begin{theorem}
\label{thm:main}
There is a deterministic polynomial time algorithm that given, $S,\mc{D}$ and a weighted codeword $w:S\to \F\times [0,1]$ produces a codeword $C\in \mc{C}(S,\mc{D})$ such that $\Delta(w,C) < \mu(S,\mc{D})/2,$ if one exists. (If no such codeword $C$ exists, the algorithm outputs some arbitrary polynomial.)
\end{theorem}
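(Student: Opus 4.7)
The plan is to prove correctness by induction on $m$, with the base case being weighted Reed-Solomon decoding (a standard weighted variant of Welch--Berlekamp, as used by Kim--Kopparty). The inductive hypothesis asserts that \textsf{WeightedDownsetDecoder} correctly solves the unique-decoding problem in fewer than $m$ variables. For the inductive step, suppose there exists $C \in \mc{C}(S, \mc{D})$ with $\Delta(w, C) < \mu(S,\mc{D})/2$. Writing $C(\mbf{X}, Y) = \sum_{j=0}^{d} Q_j(\mbf{X}) Y^{d-j}$, the downset structure forces $Q_j \in \mc{C}(\wt{S}, \mc{D}_{d-j})$. I will show by a secondary induction on $i$ that after the $i$-th iteration of the outer \textbf{for} loop, the algorithm has correctly computed $Q_0, \ldots, Q_i$.

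Assume $Q_0, \ldots, Q_{i-1}$ have been correctly recovered; then by construction $\Delta(w_i, C_i) = \Delta(w, C) < \mu(S, \mc{D})/2$, where $C_i(\mbf{X}, Y) := \sum_{j=i}^{d} Q_j(\mbf{X}) Y^{d-j}$. For each $\mbf{x} \in \wt{S}$, let $P_\mbf{x}(Y) := C_i(\mbf{x}, Y)$, which lies in the Reed--Solomon code $\mc{C}(S_m, \{0, \ldots, d-i\})$ and whose leading coefficient is $Q_i(\mbf{x})$. Write $\mu_m := \mu(S_m, \{0, \ldots, d-i\})$. The heart of the proof is the pointwise inequality
\[
\textrm{contrib}_\mbf{x}(f_i, Q_i) \;\leq\; \frac{\Delta(w_{i,\mbf{x}}, P_\mbf{x})}{\mu_m},
\]
where $\textrm{contrib}_\mbf{x}$ denotes the summand at $\mbf{x}$ in the weighted-distance sum $\Delta(f_i, Q_i)$.

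I would verify this by splitting on two dichotomies: whether the inner \textsf{WeightedRSDecoder} returns some $G_\mbf{x}$ with $\Delta(w_{i,\mbf{x}}, G_\mbf{x}) < \mu_m/2$, and whether $\sigma_\mbf{x} = Q_i(\mbf{x})$. When RS decoding succeeds and $G_\mbf{x} = P_\mbf{x}$, the inequality holds with equality. When RS decoding succeeds but $G_\mbf{x} \neq P_\mbf{x}$, Lemma~\ref{lem:triangle} applied to the distinct pair $(P_\mbf{x}, G_\mbf{x})$ in the RS code gives $\Delta(w_{i,\mbf{x}}, P_\mbf{x}) \geq \mu_m - \delta_\mbf{x}$, which yields the claim whether $\sigma_\mbf{x}$ agrees with $Q_i(\mbf{x})$ or not. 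When RS decoding fails, in particular $\Delta(w_{i,\mbf{x}}, P_\mbf{x}) \geq \mu_m/2$, and the contribution is exactly $1/2$, again giving the inequality. This case analysis is the main (and really the only nontrivial) obstacle.

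Summing the pointwise inequality over $\mbf{x} \in \wt{S}$ and invoking Item~2 of Lemma~\ref{lem:downset} gives
\[
\Delta(f_i, Q_i) \;\leq\; \frac{\Delta(w_i, C_i)}{\mu_m} \;<\; \frac{\mu(S, \mc{D})}{2\mu_m} \;\leq\; \frac{\mu(\wt{S}, \mc{D}_{d-i})}{2}.
\]
By the outer induction hypothesis, the recursive call $\textsf{WeightedDownsetDecoder}(\wt{S}, \mc{D}_{d-i}, f_i)$ then returns $Q_i$, completing the inner induction on $i$. After the loop finishes, the algorithm outputs $\sum_{i} Q_i(\mbf{X}) Y^{d-i} = C$ as required. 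For the polynomial-time bound, note that the recursion tree has depth $m$ and branching factor at most $\max_i k_i$, and every level performs at most $|\wt{S}|+1$ calls to the weighted RS decoder plus one recursive call of the downset decoder, so the total work is bounded by a polynomial in $|S|$ and $|\mc{D}|$.
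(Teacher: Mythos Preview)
Your proposal is correct and follows essentially the same approach as the paper: outer induction on $m$ with the weighted Reed--Solomon decoder as base case, inner induction on $i$ to recover the coefficients of $Y^{d-i}$, the same pointwise inequality bounding the contribution at each $\mbf{x}$ by $\Delta(w_{i,\mbf{x}},P_\mbf{x})/\mu_m$, and the same use of Lemma~\ref{lem:triangle} and Lemma~\ref{lem:downset} Item~2 to finish. The only cosmetic differences are that the paper keeps separate names $P_i$ (true) and $Q_i$ (computed) rather than overloading $Q_i$, and organizes the case analysis by $\delta_i(\mbf{x})=1$ versus $\delta_i(\mbf{x})<1$ rather than by whether $G_\mbf{x}=P_\mbf{x}$; your brief running-time remark is an addition the paper omits.
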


For the case when $m=1$, a result of Forney~\cite{Forney} yields a deterministic polynomial-time algorithm for this problem. We call this algorithm \tsf{WeightedRSDecoder} and refer the reader to~\cite{Forney} or~\cite{kimkopparty2017} for a description.

\begin{proof}
	The algorithm is specified as \tsf{WeightedDownsetDecoder} below (Algorithm~\ref{algo:wddecoder}). We prove its correctness by induction on $m$.
	
	For the base case $m=1$, we simply use the algorithm \tsf{WeightedRSDecoder} and so there is nothing to prove.
	
	Now we assume the correctness of \tsf{WeightedDownsetDecoder} algorithm for $m-1$ indeterminates.
	
	Let $w:S\to\mb{F}\times[0,1]$ be a received weighted word. Suppose there is a $C\in\mc{C}(S,\mc{D})$ with $\Delta(w,C)<\mu(S,\mc{D})/2$. We can write
	\[
	C(\mbf{X},Y) = \sum_{i=0}^d P_i(\mbf{X})Y^{d-i},
	\]
	where $d=\deg_m(\mc{D})$ as in Algorithm~\ref{algo:wddecoder}.
	
	We show, by induction on $i\in \{0,\ldots,d\}$, that the algorithm correctly decodes the polynomial $P_i(\mbf{X}).$ In other words, for each $i\in \{0,\ldots,d\},$ the polynomial $Q_i(\mbf{X})$ computed by Algorithm~\ref{algo:wddecoder} is the same as the polynomial $P_i(\mbf{X}).$
	
	Fix $i\in\{0,\ldots,d\}$.  Assume that the algorithm has correctly decoded $P_j(\mbf{X})$ for each $j < i$. Let $C_i(\mbf{X},Y)=\sum_{j=i}^d P_j(\mbf{X})Y^{d-j}$. Note that $P_j\in\mc{C}(\wt{S},\mc{D}_{d-j})$, for all $j\in\{0,\ldots,d\}$.
	
	To show that $Q_i(\mbf{X}) = P_i(\mbf{X}),$ it is enough to show that $\Delta(f_i,P_i)<\mu(\wt{S},\mc{D}_{d-i})/2$ where $f_i$ is as computed by the algorithm. Then, the induction hypothesis implies that $Q_i(\mbf{X}) = P_i(\mbf{X})$.
	
	Define $w_i:S\to\mb{F}\times[0,1]$ as in the algorithm by 
	\[
	w_i(\mbf{x},y)=\left(r(\mbf{x},y)-\sum_{j=0}^{i-1}Q_j(\mbf{x})y^{d-j},u(\mbf{x},y)\right)=:(r_i(\mbf{x},y),u_i(\mbf{x},y)),\quad\tx{for all }(\mbf{x},y)\in\wt{S}\times S_m.
	\]
	
	By induction, we know that $Q_j(\mbf{X}) = P_j(\mbf{X})$ for $j < i$. Hence, we observe that
	\begin{align*}
	r_i(\mbf{X},Y)-C_i(\mbf{X},Y)&=\bigg(r_i(\mbf{X},Y)+\sum_{j=0}^{i-1}P_j(\mbf{X})Y^{d-j}\bigg)-\bigg(C_i(\mbf{X},Y)+\sum_{j=0}^{i-1}P_j(\mbf{X})Y^{d-j}\bigg)\\
	&=\bigg(r_i(\mbf{X},Y)+\sum_{j=0}^{i-1}Q_j(\mbf{X})Y^{d-j}\bigg)-C(\mbf{X},Y)\\
	&=r(\mbf{X},Y)-C(\mbf{X},Y).
	\end{align*}
	Hence, $\Delta(w_i,C_i)=\Delta(w,C)<\mu(S,\mc{D})/2$.
	
	We now analyze $\Delta(f_i,P_i).$ Recall that we have
	\[
	\Delta(f_i,P_i) = \sum_{\mbf{x}\in \wt{S}: \sigma_\mbf{x} = P_i(\mbf{x})} \frac{\delta_i(\mbf{x})}{2} + \sum_{\mbf{x}\in \wt{S}: \sigma_\mbf{x} \neq P_i(\mbf{x})}\left(1- \frac{\delta_i(\mbf{x})}{2}\right).
	\]
	
	Fix some $\mbf{x}\in\wt{S}$.  Define $C_{i,\mbf{x}}(Y)=C_i(\mbf{x},Y).$ Also, define
	\[
	\Delta(f_i(\mbf{x}),P_i(\mbf{x}))=\begin{cases}
	\dfrac{\delta_i(\mbf{x})}{2},&\sigma_\mbf{x}=P_i(\mbf{x})\\
	1-\dfrac{\delta_i(\mbf{x})}{2},&\sigma_{\mbf{x}}\ne P_i(\mbf{x})
	\end{cases}
	\]We claim that 
	\begin{equation}
	\label{eq:ubd}
	\Delta(f_i(\mbf{x}),P_i(\mbf{x}))\leq \frac{\Delta(w_{i,\mbf{x}},C_{i,\mbf{x}})}{\mu(S_m,\{0,\ldots,d-i\})}.
	\end{equation}
	We prove (\ref{eq:ubd}) by a case analysis.
	
	\begin{enumerate}[(a)]
	  \item $\delta_i(\mbf{x}) = 1.$
	  
	  This implies that $\Delta(w_{i,\mbf{x}},G_{\mbf{x}}) \geq \mu(S_m,\{0,\ldots,d-i\})/2.$ In particular, this implies that $\Delta(w_{i,\mbf{x}},C_{i,\mbf{x}}) \geq \mu(S_m,\{0,\ldots,d-i\})/2$ since otherwise the Reed-Solomon decoder would have returned $C_{i,\mbf{x}}$ instead of $G_{\mbf{x}}.$ This immediately implies (\ref{eq:ubd}).
	  
	  So from now we will assume that $\delta_i(\mbf{x}) = \Delta(w_{i,\mbf{x}},G_{\mbf{x}})/(\mu(S_m,\{0,\ldots,d-i\})/2) < 1.$ By Lemma~\ref{lem:triangle}, it follows that $\Delta(w_{i,\mbf{x}},C_{i,\mbf{x}})\geq \mu(S_m,\{0,\ldots,d-i\})/2 > \Delta(w_{i,\mbf{x}},G_\mbf{x}).$
		\item  $\delta_i(\mbf{x}) < 1$ and $\sigma_\mbf{x}=P_i(\mbf{x})$.
		
		In this case, we immediately have
		\[
		\Delta(f_i(\mbf{x}),P_i(\mbf{x}))=\frac{\delta_i(\mbf{x})}{2}=\frac{\Delta(w_{i,\mbf{x}},G_\mbf{x})}{\mu(S_m,\{0,\ldots,d-i\})}\le\frac{\Delta(w_{i,\mbf{x}},C_{i,\mbf{x}})}{\mu(S_m,\{0,\ldots,d-i\})}.
		\]
		
		\item  $\delta_i(\mbf{x}) < 1$ and $\sigma_\mbf{x}\ne P_i(\mbf{x})$.

		As in the previous case, we have $\dfrac{\delta_i(\mbf{x})}{2}=\dfrac{\Delta(w_{i,\mbf{x}},G_\mbf{x})}{\mu(S_m,\{0,\ldots,d-i\})}$.  But as $\sigma_{\mbf{x}} \neq P_i(\mbf{x}),$ we have $G_\mbf{x}\ne C_{i,\mbf{x}}$. Thus, by Lemma~\ref{lem:triangle}, it follows that  $\Delta(w_{i,\mbf{x}},C_{i,\mbf{x}})+\Delta(w_{i,\mbf{x}},G_\mbf{x})\ge \mu(S_m,\{0,\ldots,d-i\})$.  Hence
		\[
		\Delta(f_i(\mbf{x}),P_i(\mbf{x}))=1-\frac{\delta_i(\mbf{x})}{2}=\frac{\mu(S_m,\{0,\ldots,d-i\})-\Delta(w_{i,\mbf{x}},G_\mbf{x})}{\mu(S_m,\{0,\ldots,d-i\})}\le\frac{\Delta(w_{i,\mbf{x}},C_{i,\mbf{x}})}{\mu(S_m,\{0,\ldots,d-i\})}.
		\]
		
	\end{enumerate}
	This concludes the proof of (\ref{eq:ubd}). Using (\ref{eq:ubd}), we get
	\begin{align*}
	\Delta(f_i,P_i)&=\sum_{\mbf{x}\in \wt{S}}\Delta(f_i(\mbf{x}),P_i(\mbf{x}))\le\sum_{\mbf{x}\in\wt{S}}\frac{\Delta(w_{i,\mbf{x}},C_{i,\mbf{x}})}{\mu(S_m,\{0,\ldots,d-i\})}\\
	&=\frac{\Delta(w_i,C_i)}{\mu(S_m,\{0,\ldots,d-i\})}=\frac{\Delta(w,C)}{\mu(S_m,\{0,\ldots,d-i\})}\\
	&<\frac{\mu(S,\mc{D})}{2\mu(S_m,\{0,\ldots,d-i\})}\\
	&\le\frac{\mu(\wt{S},\mc{D}_{d-i})}{2},\quad\tx{by Lemma~\ref{lem:downset} Item 2}.
	\end{align*}
	This completes the proof.
\end{proof}

\paragraph*{Acknowledgments.} The authors are grateful to Swastik Kopparty and Madhu Sudan for their helpful comments and encouragement.

\bibliographystyle{alpha}
\bibliography{downsetdecoder}

\end{document}